\title{HS-CAI: A Hybrid DCOP Algorithm via Combining Search with Context-based Inference}
\author{
	Dingding Chen,\textsuperscript{\rm 1}
	Yanchen Deng,\textsuperscript{\rm 2} 
	 Ziyu Chen,\textsuperscript{\rm 1,}\thanks{ Corresponding author.}
	Wenxing Zhang,\textsuperscript{\rm 1}
	Zhongshi He\textsuperscript{\rm 1}\\
	\textsuperscript{\rm 1}College of Computer Science, Chongqing University, Chongqing, China\\
	\textsuperscript{\rm 2}School of Computer Science and Engineering, Nanyang Technological University, Singapore\\
	\{dingding,chenziyu,zshe\}@cqu.edu.cn, ycdeng@ntu.edu.sg, wenxinzhang18@163.com
}
\begin{document}
\maketitle
\begin{abstract}
Search and inference are two main strategies for optimally solving Distributed Constraint Optimization Problems (DCOPs). Recently, several algorithms were proposed to combine their advantages. Unfortunately, such algorithms only use an approximated inference as a one-shot preprocessing phase to construct the initial lower bounds which lead to inefficient pruning under the limited memory budget. 
On the other hand, iterative inference algorithms (e.g., MB-DPOP) perform a context-based complete inference for all possible contexts but suffer from tremendous traffic overheads.
In this paper, $(i)$ hybridizing search with context-based inference, we propose a complete algorithm for DCOPs, named {HS-CAI} where the inference utilizes the contexts derived from the search process to establish tight lower bounds while the search uses such bounds for efficient pruning and thereby reduces contexts for the inference.
Furthermore, $(ii)$ we introduce a context evaluation mechanism to select the context patterns for the inference to further reduce the overheads incurred by iterative inferences. 
Finally, $(iii)$ we prove the correctness of our algorithm and the experimental results demonstrate its superiority over the state-of-the-art.
\end{abstract}

\section{Introduction}
Distributed Constraint Optimization Problems (DCOPs) \cite{hirayama1997distributed,fioretto2018distributed} are an elegant model for representing Multi-Agent Systems (MAS) where agents coordinate with each other to optimize a global objective. Due to their ability to capture essential MAS aspects, DCOPs can formalize various applications in the real world such as sensor network \cite{farinelli2014agent}, task scheduling \cite{maheswaran2004taking,fioretto2017multiagent}, smart grid \cite{fioretto2017distributed} and so on.

Incomplete algorithms for DCOPs \cite{zhang2005distributed,Maheswaran2006A,farinelli2008decentralised,ottens2017duct} aim to rapidly find solutions at the cost of sacrificing optimality. On the contrary, complete algorithms guarantee the optimal solution and can be generally classified into inference-based and search-based algorithms. DPOP \cite{petcu2005scalable} and Action\_GDL \cite{vinyals2009generalizing} are typical inference-based complete algorithms which employ a dynamic programming paradigm to solve DCOPs. However, they require a linear number of messages of exponential size with respect to the induced width. Accordingly, ODPOP \cite{petcu2006odpop} and MB-DPOP \cite{petcu2007mb} were proposed to trade the message number for smaller memory consumption by propagating the dimension-limited utilities with the corresponding contexts iteratively. That is, they iteratively perform a context-based inference to solve DCOPs optimally when the memory budget is limited.

Search-based complete algorithms like SBB \cite{hirayama1997distributed}, AFB \cite{gershman2009asynchronous}, PT-FB \cite{litov2017forward}, ADOPT \cite{modi2005adopt} and its variants \cite{yeoh2010bnb,gutierrez2011generalizing} perform distributed backtrack searches to exhaust the search space. They have a linear size of messages but an exponential number of messages. 
Furthermore, these algorithms only use local knowledge to update the lower bounds, which exerts a trivial effect on pruning and makes them infeasible for solving large-scale problems. Then, PT-ISABB \cite{deng2019pt}, DJAO \cite{Kim2014DJAO} and ADPOT-BDP \cite{atlas2008memory} came out to attempt to hybridize search with inference, where an approximated inference is used to construct the initial lower bounds for the search process. More specifically, PT-ISABB and ADOPT-BDP use ADPOP \cite{Petcu2005Approximations} to establish the lower bounds, while DJAO employs a function filtering technique \cite{Brito2010Improving} to get them. Here, ADPOP is an approximate version of DPOP by dropping the exceeding dimensions to ensure that each message size is below the memory limit.
However, given the limited memory budget, the lower bounds obtained in a one-shot preprocessing phase are still inefficient for pruning since such bounds cannot be tightened by considering the running contexts. That is, the existing hybrid algorithms use only a context-free approximated inference as a one-shot phase to construct the initial lower bounds. 

In this paper, we investigate the possibility of combining search with context-based inference to solve DCOPs efficiently. Specifically, our main contributions are listed as follows:
\begin{itemize}
	\item We propose a novel complete DCOP algorithm by hybridizing search with context-based inference, called HS-CAI where the search adopts a tree-based SBB to find the optimal solution and provides contexts for the inference, 
	while the inference iteratively performs utility propagation for these contexts to construct the tight lower bounds to speed up the search process.
	\item We introduce a context evaluation mechanism to extract the context patterns for the inference from the contexts derived from the search process so as to further reduce the number of context-based inferences. 
	\item We theoretically show the completeness of HS-CAI and prove that the lower bounds produced by the context-based inference are at least as tight as the ones established by the context-free approximated inference under the same memory budget. Moreover, the experimental results demonstrate HS-CAI outperforms state-of-the-art complete DCOP algorithms.
\end{itemize}

\begin{figure}	
	\centering
	\subfloat[Constraint graph]{
		\includegraphics[scale=0.21]{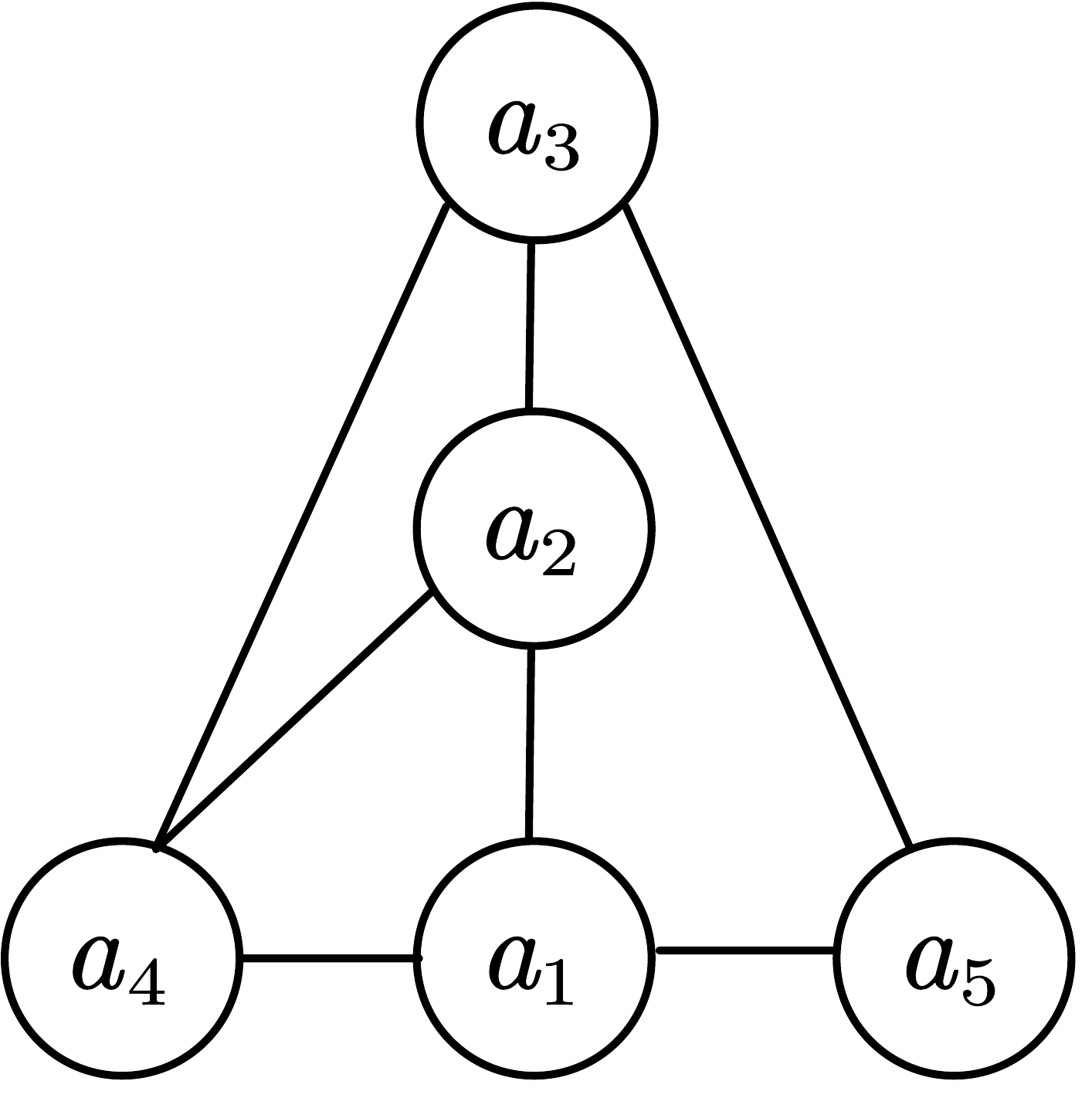} 
	}
		\subfloat[Constraint matrix]{
			\includegraphics[scale=0.75]{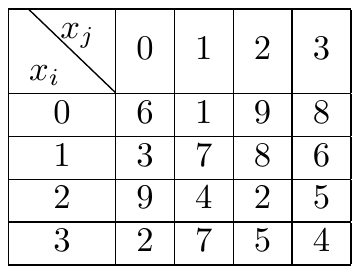} 
		}
	\subfloat[Pseudo tree]{
		\includegraphics[scale=0.14]{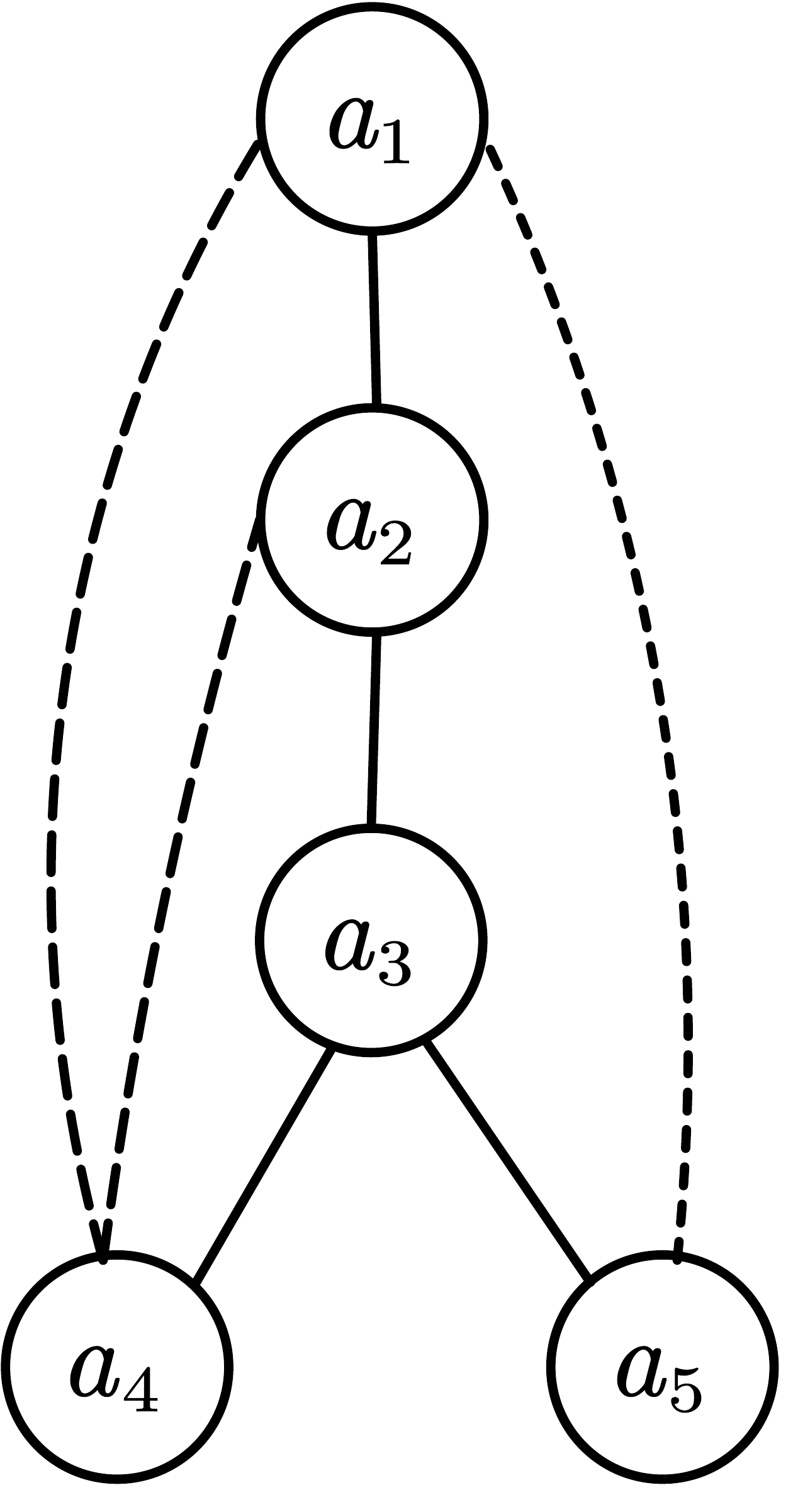} 
	}
	\centering
	\caption{An example of a DCOP and its pseudo tree}
	\label{graph}
\end{figure}

\section{Background}
In this section, we expound the preliminaries including DCOPs, pseudo tree, MB-DPOP and ODPOP.
\subsection{Distributed Constraint Optimization Problems}
A distributed constraint optimization problem \cite{modi2005adopt} can be defined by a tuple $\langle A,X,D,F\rangle$ where 
\begin{itemize}
	\item $A=\{a_1,a_2,\dots,a_n\}$ is a set of agents.
	\item $X=\{x_1,x_2,\dots,x_m\}$ is a set of variables.
	\item $D=\{D_1,D_2,\dots,D_m\}$ is a set of finite, discrete domains. Each variable $x_i$ takes a value in $D_i$.
	\item $F=\{f_1,f_2,\dots,f_q\}$ is a set of constraint functions. Each function $f_i:D_{i1}\times \cdots \times D_{ik}\rightarrow \mathbb{R}_{\geq 0}$ specifies the non-negative cost for each combination of $x_{i1},\cdots ,x_{ik}$.
\end{itemize}

For the sake of simplicity, we assume that each agent holds exactly one variable (and thus the term \emph{agent} and \emph{variable} could be used interchangeably) and all constraints are binary (i.e., $f_{ij}:D_i\times D_j\rightarrow \mathbb{R}_{\geq 0}$). A solution to a DCOP is the assignments to all the variables such that the total cost is minimized. That is,
$$\small X^*=\mathop{\arg\min}_{d_i\in D_i,d_j\in D_j}\sum_{f_{ij}\in F} f_{ij}(x_i=d_i,x_j=d_j)$$ 
A DCOP can be represented by a constraint graph where a vertex denotes a variable and an edge denotes a constraint. Fig. \ref{graph} (a) presents a DCOP with five variables and seven constraints. For simplicity, the domain size of each variable is four and all constraints are identical as shown in Fig. \ref{graph}(b). 

\subsection{Pseudo Tree}
A depth first search \cite{freuder1985taking,dechter2003constraint} arrangement of a constraint graph is a pseudo tree with the property that different branches are independent, and categorizes its constraints into tree edges and pseudo edges (i.e., non-tree edges). Thus, the neighbors of agent $a_i$ can be classified into its parent $P(a_i)$, children $C(a_i)$, pseudo parents $PP(a_i)$ and pseudo children $PC(a_i)$ based on their positions in the pseudo tree and the type edges they connect with $a_i$. For clarity, we denote all the (pseudo) parents of $a_i$ as $AP(a_i)=PP(a_i)\cup \{P(a_i)\}$ and the set of ancestors who share constraints with $a_i$ and its descendants as its separators $Sep(a_i)$ \cite{petcu2005scalable}. Fig. \ref{graph}(c) presents a possible pseudo tree deriving from Fig. \ref{graph}(a). 
\subsection{MB-DPOP and ODPOP}
MB-DPOP and ODPOP apply an iterative context-based utility propagation to aggregate the optimal global utility. 
Specifically, MB-DPOP first uses a cycle-cuts idea \cite{dechter2003constraint} on a pseudo tree to determine cycle-cut variables and groups these variables into clusters. Within the cluster, MB-DPOP preforms a bounded-memory exploration; anywhere else, the utility propagation from DPOP applies. Specifically, agents in each cluster propagate memory-bounded utilities for all the contexts of cycle-cut variables. 
As for ODPOP, each agent adopts an incremental and best-first fashion to propagate the context-based utility. Specifically, an agent repeatedly asks its children for their suggested context-based utilities until it can calculate a suggested utility for its parent during the utility propagation phase. The phase finishes after the root agent receiving enough utilities to determine the optimal global utility. 

\section{Proposed Method}
In this section, we present a novel complete DCOP algorithm which utilizes both search and inference interleaved. 
\subsection{Motivation}
It can be concluded that search can exploit bounds to prune the solution space but the pruning efficiency is closely related to the tightness of bounds. However, most of search-based complete algorithms can only use local knowledge to compute the initial lower bounds, which leads to inefficient pruning. On the other hand, inference-based complete algorithms can aggregate the global utility promptly, but their memory consumption is exponential in the induced width. Therefore, it is natural to combine both search and inference by performing memory-bounded inferences to construct efficient lower bounds for search. Unfortunately, the existing hybrid algorithms only perform an approximated inference to construct one-shot bounds in the preprocessing phase, which would lead to inefficient pruning given a limited memory budget. In fact, the bounds can be tightened by the context-based inference. That is, instead of dropping a set of dimensions, named \emph{approximated} dimensions, to stay below the memory budget, we explicitly consider the running-context assignments to a subset of approximated dimensions (i.e., the context patterns) and compute tight bounds w.r.t. the context patterns. Here, we denote an assigned subset of approximated dimensions as \emph{decimated} dimensions.

More specifically, we aim to combine the advantages of search and context-based inference to optimally solve DCOPs. Different from the existing hybrid methods, we compute tight bounds for the running contexts by performing the context-based inference for the context patterns chosen from the contexts derived from the search process.

\subsection{Proposed Algorithm}
Now, we present our proposed algorithm which consists of a preprocessing phase and a hybrid phase.

\subsubsection{Preprocessing Phase} performs a bottom-up dimension and utility propagation to accumulate the approximated dimensions and establish the initial lower bounds based on these propagated utilities for search.
Accordingly, we employ a tailored version of ADPOP with the limit $k$ to propagate the approximated dimensions and incomplete utilities (we omit the pseudo code of this phase due to the limited space). Particularly, during the propagation process, each agent $a_i$ selects the dimensions $S_i$ of its highest ancestors to approximate to make the dimension size of each outgoing utility below $k$. Then, the approximated dimensions $SList_{i}$ (i.e., the dimensions approximated by $a_i$ and its descendants) and utility $preUtil_{P(a_i)}^{i}$ sent from $a_i$ to its parent can be computed as follows: 
\begin{equation} 
\small
SList_{i}=S_i\cup ( \underset{a_c\in C( a_i )}{\cup}SList_i^{c} ) 
\label{SList}
\end{equation}
\begin{equation} 
\small
preUtil_{P(a_i)}^{i}=\underset{S_i\cup \{ x_i \}}{\min}(localUtil_i\otimes (\underset{a_c\in C(a_i)}{\otimes}preUtil_{i}^{c}))
\label{preUtil}
\end{equation}
Here, $SList_i^{c}$ and $preUtil_{i}^{c}$ are the approximated dimensions and utility received from its child $a_c\in C(a_i)$, respectively. $localUtil_i$ denotes the combination of the constraints between $a_i$ and its (pseudo) parents, i.e., 
\begin{equation} 
\small
localUtil_i=\underset{a_i\in AP(a_i)}{\otimes}f_{ij}
\end{equation}

Taking Fig. \ref{graph}(c) for example, if we set $k=1$, the dimensions dropped by $a_4$ are $\scriptsize\{x_1,x_2\}$. Thus, the approximated dimensions and utility sent from $a_4$ to $a_3$ are $\scriptsize SList_3^4=\{x_1,x_2\}$ and $\scriptsize preUtil_3^4=\underset{\{ x_1,x_2,x_4 \}}{\min}( f_{41}\otimes f_{42}\otimes f_{43} )$, respectively.

\subsubsection{Hybrid Phase} consists of the search part and context-based inference part. The search part uses a variant of SBB on a pseudo tree (i.e., a simple version of NCBB \cite{chechetka2006no}) to expand any feasible partial assignments and provides contexts for the inference part. 
By using such contexts, the inference part propagates context-based utilities iteratively to produce tight lower bounds for the search part.

Traditionally, the context-based inference is implemented by considering the assignments to the approximated dimensions (that is, the decimated dimensions are equal to the approximated dimensions). For example, MB-DPOP performs an iterative context-based inference by considering each assignment combination of cycle-cut variables. However, the approach is not a good choice for our case due to the following facts.
First, the number of the assignments of cycle-cut variables is exponential, which would incur unacceptable traffic overheads. Moreover, the propagated utilities w.r.t. a specific combination may go out of date since another inference is required as long as the assignments change, which would be very common in the search process. Therefore, it's unnecessary to perform inference for each assignment combination of the approximated dimensions.

Therefore, we consider to reduce the number of context-based inferences by making the propagated context-based utilities compatible with more contexts. Specifically, for agent $a_i$, we consider the decimated dimensions $\scriptsize PList_i\subseteq SList_i$. Then, the specific assignments to $PList_i$ serve as a context pattern and the propagated utilities will cover all the partial assignments with the same context pattern. That is, $a_i$'s context pattern can be defined by:
\begin{equation}
\small
	ctxt_{i}=\{(x_j,Cpa_{i}(x_j))| \forall x_j\in PList_i\}
\end{equation}
where $Cpa_{i}$ refers to $a_i$'s received current partial assignment, and $Cpa_{i}(x_j)$ is the current assignment of $x_j$.  

Taking agent $a_3$ in Fig. \ref{graph}(c) as an example, given the limit $k=1$, we have $SList_3=\{x_1,x_2\}$. Assume that $PList_3=\{x_1\}$. Thus, only the assignment of $x_1$ will be considered and $x_4$ (i.e., $SList_3\backslash PList_3$) will be still dropped during the context-based inference part. 
Further, assume that $ctxt_{3}=\{(x_1,0)\}$. Then, $ctxt_{3}$ covers four contexts ($\{(x_1,0),(x_2,0)\}$,$\{(x_1,0),(x_2,1)\}$,$\{(x_1,0),(x_2,2)\}$ and $\{(x_1,0),(x_2,3)\}$). Therefore, $a_3$ only need to perform inference for $ctxt_{3}$ rather than the four contexts above in this case.  

Selecting a context pattern is challenging as it offers a trade-off between the tightness of lower bounds and the number of compatible partial assignments. 
Thus, a good context pattern should comply with the following requirements. First, the good context pattern should be compatible with more contexts so as to avoid unnecessary context-based inference. In other words, these assignments are not likely to change in a short term. Second, the context pattern should result in tight lower bounds. Therefore, we propose a \emph{context evaluation mechanism} to select the context pattern according to the frequency of an assignment of each dimension in the approximated dimensions. 
In more detail, we consider the context pattern consisting of the assignments whose frequency is greater than a specified threshold $t$. Given $t$, we have $PList_i=\{x_j| Cnt_{i}(\langle x_j,Cpa_{i}(x_j)\rangle)>t, \forall x_j\in SList_i\}$ where $Cnt_{i}(\langle x_j,Cpa_{i}(x_j)\rangle)$ refers to the frequency of $Cpa_{i}(x_j)$ for $x_j$. 
With an appropriate $t$, the assignments in the context pattern could not change in a short term. 
On the other hand, if a partial assignment is hard for pruning, then there would be more assignments included in the context pattern, which guarantees the lower bound tightness. 

In addition, we introduce the variable $eval_i$ to ensure that the descendants of $a_i$ perform an inference only for a context pattern at a time. Once $a_i$ has found or received a context pattern, $eval_i$ is set to false to stop the context evaluation. And when the context pattern is incompatible with $Cpa_i$, $eval_i$ is set to true to indicate that $a_i$ can find a new context pattern.  

Next, we will detail how to implement the context evaluation mechanism and context-based inference part. Algorithm 1 presents the sketch of these procedures. We ignore the search part since it is an analogy to the tree-based branch and bound search algorithm PT-ISABB. 
\begin{figure} 
	\includegraphics[scale=0.6898]{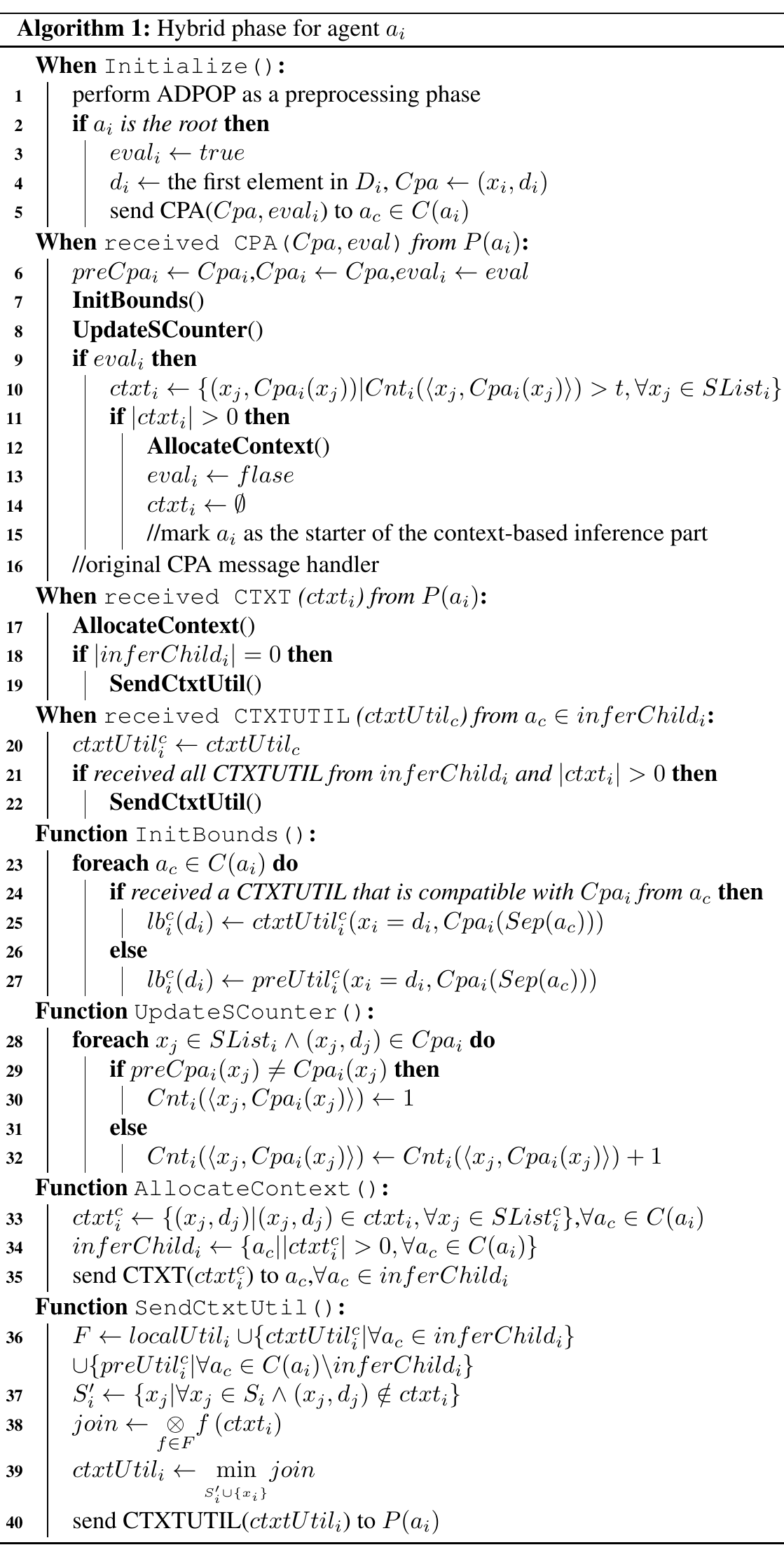} 
\end{figure}

After the preprocessing phase (line 1), the root agent starts the context evaluation by initializing $eval_i$ with true (line 2-3). Besides, it also starts the search part via CPA messages with its first assignment to its children (line 4-5).
Upon receipt of a CPA message, $a_i$ first holds the previous $Cpa_i$, and stores the received $Cpa_i$ and $eval_i$ (line 6). Afterwards, it initializes the lower bounds for each child $a_c\in C(a_i)$ according to its received utilities (line 7, 23-27). Concretely, 
the lower bound for $a_c$ is established by the context-based utility compatible with $Cpa_i$ received from $a_c$ (line 24-25). Otherwise, the bound is computed by the utility received from $a_c$ in the preprocessing phase (line 26-27). 
Next, $a_i$ updates $Cnt_i$ based on $Cpa_i$ and the previous one (line 8, 28-32). 
Specifically, for each dimension in $SList_i$, $a_i$ clears its counter if its assignment differs from its previous one (line 29-30). Otherwise, $a_i$ increases that counter (line 31-32). Then, $a_i$ finds a context pattern $ctxt_i$ if the pattern for the context-based inference has not been determined (line 9-10). 
After finding one, it allocates $ctxt_i$ and sends the allocated patterns via CTXT messages to its children who need to execute the context-based inference (i.e., $inferChild_i$) (line 11-15, 33-35). Here, $ctxt_i^c$, the allocated pattern for $a_c$, is a partition of $ctxt_i$ based on $a_c$'s approximated dimensions $SList_i^c$ (line 33).

When receiving a CTXT message, $a_i$ allocates the received pattern if there is any child who needs to perform the context-based inference (line 17, 33-35). Otherwise, it sends its context-based utility to its parent (line 18-19, 36-40). Here, its context-based utility is computed by the following steps. Firstly, it joins its local utility with the context-based utilities from $inferChild_i$ and the utilities from the other children (line 36).
Next, it applies $ctxt_i$ to fix the values of the partial dimensions in $S_i$ so as to improve the completeness of the utility (line 38). Finally, it drops the dimensions of the utility to stay below the limit $k$ (line 39). 
After all the context-based utilities from $inferChild_i$ have arrived, $a_i$ sends its context-based utility to its parent if it is not the starter of the context-based inference (line 21-22).

Considering $a_3$ in Fig. \ref{graph}(c), assume that the context pattern has not been determined and $\scriptsize Cnt_3=\{(\langle x_1,0 \rangle,1),(\langle x_2,0 \rangle,1)\}$. Given $t=1$, we have $\scriptsize Cnt_3=\{(\langle x_1,0 \rangle,2),(\langle x_2,1 \rangle,1)\}$ and $ctxt_3=\{(x_1,0)\}$ after $a_3$ receives a CPA with $\scriptsize \{(x_1,0),(x_2,1)\}$.
 Since the approximated dimensions for its child $a_4$ are $\scriptsize \{x_1,x_2\}$, $a_3$ sends a CTXT message with the context pattern $\scriptsize \{(x_1,0)\}$ to $a_4$. 
When receiving the pattern $\scriptsize \{(x_1,0)\}$, $a_4$ sends the context-based utility $\scriptsize ctxtUtil_3^4=\underset{\{x_2,x_4 \}}{\min}( f_{41}(x_1=0)\otimes f_{42}\otimes f_{43} )$ to $a_3$. Then, $a_3$ uses $ctxtUtil_3^4$ to compute the lower bound for $a_4$ after receiving the CPA message with $\scriptsize \{(x_1,0),(x_2,2)\}$ or $\scriptsize \{(x_1,0),(x_2,3)\}$. 

\section{Theoretical Results}
In this section, we first prove the effectiveness of the context-based inference on HS-CAI, and further establish the completeness of HS-CAI. 
Finally, we give the complexity analysis of the proposed method.
	\newtheorem{lemma}{Lemma}

	\newtheorem{theorem}{Theorem}

\subsection{Lower Bound Tightness}
\begin{lemma}
	\label{lemmaBound}
	For a given $Cpa_i$, the lower bound $lb_{i}^c(d_i)$ of $a_c\in C(a_i)$ for $d_i$ produced by the context-based utility ($ctxtUtil_i^c$) is at least as tight as the one established by the utility ($preUtil_i^c$).
	That is, $ctxtUtil_i^c(PA)\geq preUtil_i^c(PA)$, where $PA=Cpa_{i}(Sep(a_c))\cup \{(x_i,d_i)\} $.
\end{lemma}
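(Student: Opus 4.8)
The plan is to prove the statement by structural induction on the pseudo tree, processing agents from the leaves upward and showing that at every agent the context-based utility pointwise dominates the preprocessing utility on all compatible contexts; the lemma then follows by evaluating at the specific point $PA$, since by the bound-initialization steps (lines 24--27) the lower bound $lb_i^c(d_i)$ is exactly the corresponding utility value at $PA$, and ``tighter'' means a larger value because the costs are non-negative and we minimize.

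First I would isolate the two elementary monotonicity facts that drive everything. Since $\otimes$ is addition of non-negative costs and $\min$ is projection, both are monotone: if $f_1 \ge f_2$ and $g_1 \ge g_2$ pointwise, then $f_1 \otimes g_1 \ge f_2 \otimes g_2$ and $\min_V f_1 \ge \min_V f_2$. The key inequality is that for any function $h$ over disjoint scopes $V_1, V_2$ and any fixed assignment $a$ to $V_1$, we have $\min_{V_2} h(V_1 = a, V_2) \ge \min_{V_1 \cup V_2} h$, because fixing $V_1$ to a single slice can only raise the minimum relative to also optimizing over $V_1$. This is precisely the gap between the two inference modes: by the preprocessing recursion~\eqref{preUtil}, $preUtil_i^c$ minimizes over $a_c$'s approximated dimensions, whereas (lines 36--39) $ctxtUtil_i^c$ instead fixes the decimated dimensions in $PList_c$ to the values prescribed by $ctxt_c$ and minimizes only over the remaining ones.

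For the base case of a leaf agent $a_c$, both utilities reduce to a projection of $localUtil_c$; since $ctxt_c$ is drawn from $Cpa$ and is therefore compatible with $PA$, the decimated dimensions are fixed to exactly their $PA$-values while $preUtil_i^c$ minimizes over them, so the slicing inequality directly yields $ctxtUtil_i^c(PA) \ge preUtil_i^c(PA)$. For the inductive step at an internal agent, I would expand both utilities over the same $localUtil_c$ and the children's messages. By the induction hypothesis each child's context-based utility dominates its preprocessing utility on the relevant contexts; combining these dominating child utilities with the identical local utility under the monotone $\otimes$, and then reducing dimensions where $ctxtUtil$ fixes the context-pattern dimensions while $preUtil$ minimizes over them, the two monotonicity facts compose to give $ctxtUtil_i^c(PA) \ge preUtil_i^c(PA)$.

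The main obstacle I expect is the dimension bookkeeping rather than the inequalities themselves: I must check that the decimated dimensions that $ctxtUtil$ fixes are exactly the ones $preUtil$ minimizes over, that the context-pattern values coincide with the assignments recorded in $PA = Cpa_i(Sep(a_c)) \cup \{(x_i,d_i)\}$ (which relies on the compatibility of $ctxt$ with $Cpa_i$ guaranteed by the $eval_i$ mechanism), and that the separator scopes align so that both utilities are genuinely compared at the same point. Once the scopes are matched and this compatibility is invoked, the monotonicity composition closes the argument.
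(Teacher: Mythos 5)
Your proposal takes essentially the same route as the paper: the paper also proves the lemma by leaf-to-root induction on the pseudo tree, with the base case being exactly your slicing inequality (written out as a sum decomposition whose surplus terms $f_{cj}(x_c,d_j)-\min_{x_j}f_{cj}(x_c,x_j)$ are non-negative) and the inductive step combining the induction hypothesis with the observation that minimizing over the smaller set $S_c'\subset S_c$ while the remaining decimated dimensions are fixed to their $PA$-compatible values can only increase the result. The only difference is presentational --- you state the monotonicity and slicing facts as standalone principles and flag the $ctxt_c$/$Cpa_i$ compatibility explicitly, whereas the paper inlines both.
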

\begin{proof}
	Directly from the pseudo code, $S_c^\prime$, the dimensions dropped by $a_c$ in the context-based inference part (line 37), can be defined by:\\
	$$S_c^\prime = \{x_j|(x_j,d_j)\notin ctxt_c, \forall x_j\in S_c\}$$
	where $ctxt_c$ is the context pattern for $a_c$'s context-based inference, and $S_c$ is the dimensions dropped by $a_c$ in the preprocessing phase. Since $a_i$ has received $ctxtUtil_i^c$ from $a_c$, we have $|ctxt_c|>0$ (line 34-35, 20-21). Thus, $S_c^\prime\subset S_c$ is established.
	
	Next, we will prove Lemma \ref{lemmaBound} by induction. 
	
	Base case. $a_i$'s children are leaf agents. For each child $a_c\in C(a_i)$, we have\\
	$$
		\scriptsize
	\begin{aligned}
	&ctxtUtil_i^c(PA)
	=(\underset{S_{c}^{\prime}\cup \{ x_c \}}{\min}localUtil_c) (PA) \\
	&=\underset{x_c}{\min}(\sum_{x_j\in AP(a_c ) \backslash S_{c}^{\prime}}{f_{cj}(x_c,d_j )}+\sum_{x_j\in S_{c}^{\prime}}{\underset{x_j}{\min}f_{cj}(x_c,x_j )})
	\\
	&=\underset{x_c}{\min}(\sum_{x_j\in AP(a_c) \backslash S_c}{f_{cj}(x_c,d_j)}
	+\sum_{x_j\in S_c}{\underset{x_j}{\min}f_{cj}(x_c,x_j)}\\
	&+\sum_{x_j\in S_c\backslash S_{c}^{\prime}}{(f_{cj}(x_c,d_j) -\underset{x_j}{\min}f_{cj}(x_c,x_j))})\\
	&\geq\underset{x_c}{\min}(\sum_{x_j\in AP(a_c) \backslash S_c}{f_{cj}(x_c,d_j)}+\sum_{x_j\in S_c}{\underset{x_j}{\min}f_{cj}(x_c,x_j)}) 
	\\
	&=(\underset{S_{c}\cup \{ x_c \}}{\min}localUtil_c) (PA)=preUtil_i^c(PA)\\
	\end{aligned}
	$$
	where $d_j$ is the assignment of $x_j$ in $PA$. The equation in the third to the fourth step holds since $S_c^\prime\subset S_c$. Thus, we have proved the basis.
	
	Inductive hypothesis. Assume that the lemma holds for all $a_i$'s children. Next, we are going to show the lemma holds for $a_i$ as well. For each $a_c\in C(a_i)$, we have\\		
	$$
	\scriptsize
	\begin{aligned}
	\scriptsize
	&ctxtUtil_i^c(PA)
	=(\underset{S_{c}^{\prime}\cup \{ x_c \}}{\min}(localUtil_c+\sum_{a_{c^{\prime}}\in inferChild_c} {ctxtUtil_{c}^{c^{\prime}}}\\
	&+\sum_{a_{c^{\prime}}\in C(a_c )\backslash inferChild_c}{preUtil_{c}^{c^{\prime}}})) (PA) \\
	&\geq(\underset{S_{c}\cup \{ x_c \}}{\min}(localUtil_c+\sum_{a_{c^{\prime}}\in inferChild_c} {ctxtUtil_{c}^{c^{\prime}}}\\
	&+\sum_{a_{c^{\prime}}\in C(a_c )\backslash inferChild_c}{preUtil_{c}^{c^{\prime}}})) (PA) \\
	&\geq (\underset{S_{c}\cup \{ x_c \}}{\min}(localUtil_c+\sum_{a_{c^{\prime}}\in C(a_c )}{preUtil_{c}^{c^{\prime}}}
	) )(PA) \\
	&=preUtil_i^c(PA)\\
	\end{aligned}
	$$
	where $inferChild_c$ are $a_c$'s children who need to perform the context-based inference. Thus, Lemma \ref{lemmaBound} is proved.
\end{proof}	
\subsection{Correctness}
\begin{lemma}
	\label{lemmaCorrect}
	Given the optimal solution $X^*$, $cost_c(X^*)$, the cost to the sub-tree rooted at $a_c\in C(a_i)$ is no less than the lower bound $lb_i^c(X^*(x_i))$. That is, $cost_c(X^*)\geq lb_i^c(X^*(x_i))$.
\end{lemma}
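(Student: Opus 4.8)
The plan is to show that both utilities that can define $lb_i^c$ never over-estimate the true subtree cost, i.e.\ to establish the \emph{admissibility} of the bound with respect to $X^*$. Inspecting the lower-bound initialization (lines 23--27), $lb_i^c(X^*(x_i))$ is set either to $ctxtUtil_i^c(PA^*)$ (when a compatible context-based utility has arrived) or to $preUtil_i^c(PA^*)$, where $PA^*=X^*(Sep(a_c))\cup\{(x_i,X^*(x_i))\}$. Hence it suffices to prove the two inequalities $preUtil_i^c(PA^*)\le cost_c(X^*)$ and $ctxtUtil_i^c(PA^*)\le cost_c(X^*)$; the lemma then follows immediately in either case. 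I would prove both by a bottom-up induction over the pseudo tree rooted at $a_c$, mirroring the inductive structure already used in the proof of Lemma~\ref{lemmaBound}.

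For the first inequality, observe that $preUtil_i^c$ is produced by the ADPOP-style recursion of Eq.~(\ref{preUtil}): at every agent it minimizes the combined utility over its own variable \emph{and} over the dropped dimensions. Since $X^*$ assigns one particular value to each such variable, replacing these fixed values by their cost-minimizing choices can only decrease the aggregated cost. Concretely, in the base case $cost_c(X^*)=localUtil_c(X^*)\ge(\min_{S_c\cup\{x_c\}}localUtil_c)(PA^*)=preUtil_i^c(PA^*)$, and the inductive step follows because minimizing $localUtil_c$ together with the children's utilities over $S_c\cup\{x_c\}$ is no larger than evaluating that same sum at the $X^*$-values, while each child's $preUtil$ is, by hypothesis, bounded above by the corresponding child subtree cost under $X^*$.

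For the second inequality I would exploit the sandwich $preUtil_i^c(PA^*)\le ctxtUtil_i^c(PA^*)\le cost_c(X^*)$, whose left half is exactly Lemma~\ref{lemmaBound}. The right half is proved by an argument symmetric to Lemma~\ref{lemmaBound}, but comparing $ctxtUtil_i^c$ against the \emph{exact} cost rather than against $preUtil_i^c$: computing $ctxtUtil_i^c$ fixes the decimated dimensions in $ctxt_c$ to their context values and minimizes only over the remaining dropped dimensions $S_c^\prime$ (lines 37--38), whereas $cost_c(X^*)$ amounts to fixing \emph{all} dimensions to their $X^*$-values. Because minimizing over the extra dimensions in $S_c^\prime$ can only lower the value, we obtain $ctxtUtil_i^c(PA^*)\le cost_c(X^*)$, provided the context values in $ctxt_c$ coincide with $X^*$ on the decimated dimensions; the induction again uses the hypothesis on both the $ctxtUtil$ received from $inferChild_c$ and the $preUtil$ received from the remaining children.

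The crux---and the step I expect to be the main obstacle---is precisely this compatibility requirement. One must argue that whenever the search holds a partial assignment $Cpa_i$ consistent with $X^*$ and $x_i=X^*(x_i)$, the context pattern $ctxt_c$ that parameterizes the context-based utility used for $lb_i^c(X^*(x_i))$ is itself consistent with $X^*$; otherwise fixing the decimated dimensions to incompatible values could make $ctxtUtil_i^c$ exceed $cost_c(X^*)$ and prune the optimum. This should follow from the fact that context patterns are subsets of the separator assignment extracted from $Cpa$ (the definitions of $ctxt_i$ and $PList_i$), so that consistency of $Cpa_i$ with $X^*$ propagates down to $ctxt_c$. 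Carefully threading this consistency through the CTXT allocation (line 33) and through the mixed combination of $ctxtUtil$ and $preUtil$ in the inductive step is the delicate bookkeeping that the full proof must handle.
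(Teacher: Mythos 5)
Your proposal is correct and follows essentially the same route as the paper: the paper likewise reduces the claim to $cost_c(X^*)\geq ctxtUtil_i^c(X^*)$ (the $preUtil$ case then following from the sandwich with Lemma~\ref{lemmaBound}) and proves that inequality by the same bottom-up induction in which minimizing over the dropped dimensions $S_c^\prime$ can only under-estimate the exact subtree cost, with the mixed $ctxtUtil$/$preUtil$ combination in the inductive step handled via Lemma~\ref{lemmaBound}. The compatibility requirement you single out as the crux --- that $ctxt_c$ must agree with $X^*$ on the decimated dimensions for the step $localUtil_c(X^*)\geq(\min_{S_c^\prime\cup\{x_c\}}localUtil_c)(X^*)$ to be meaningful --- is indeed necessary, but the paper's proof leaves it implicit (relying on the algorithm only using context-based utilities compatible with the current $Cpa$) rather than arguing it explicitly as you propose.
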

\begin{proof}
	Since we have proved the lower bounds constructed by the context-based inference part are at least as tight as the ones established by the preprocessing phase in Lemma \ref{lemmaBound}, to prove the lemma, it is sufficient to show that $cost_c(X^*)\geq ctxtUtil_i^c(X^*)$.
	
	Next, we will prove Lemma \ref{lemmaCorrect} by induction as well. 
	
	Base case. $a_i$'s children are leaf agents. For each child $a_c\in C(a_i)$, we have
	$$
	\scriptsize
	\begin{aligned}
	&cost_c(X^*)=\sum_{a_j\in AP(a_c )}{f_{cj}(d_c^*,d_j^* )}\\
	&=localUtil_c( X^{*} )\\
	&\geq (\underset{S_c^{\prime}\cup \{x_c\} }{\min}localUtil_c)(X^*)\\
	&=ctxtUtil_i^c(X^*)\\		
	\end{aligned}
	$$
	where $d_l^*$ is the assignment of $x_l$ in $X^*$, and $S_c^\prime$ is the dimensions dropped by $a_c$ in the context-based inference part. Thus, the basis is proved.

	Inductive hypothesis. Assume the lemma holds for all $a_c\in C(a_i)$. Next, we will prove the lemma also holds for $a_i$. For each child $a_c\in C(a_i)$, we have\\
	$$
	\scriptsize
	\begin{aligned}
	&cost(Spa_c^*) =\sum_{a_j\in AP(a_c)}{f_{cj}(d_c^*,d_j^*)}+\sum_{a_{c^\prime}\in C(a_c)}{cost_{c^\prime}(X^*)}	\\
	&=localUtil_c( X^{*} )+\sum_{a_{c^\prime}\in C(a_c)}{cost_{c^\prime}(X^*)}\\
	&\geq localUtil_c( X^{*} ) +\sum_{a_{c^\prime}\in C( a_c )}{ctxtUtil_c^{c^\prime}( X^{*} )}\\	
	&=(localUtil_c +\sum_{a_{c^\prime}\in C( a_c )}{ctxtUtil_c^{c^\prime})( X^{*} )}\\
	&\geq(\underset{S_{c}^{'}\cup \left\{ x_c \right\}}{\min} (localUtil_c +\sum_{a_{c^\prime}\in C( a_c )}{ctxtUtil_c^{c^\prime}))( X^{*} )}\\
	&\geq(\underset{S_{c}^{'}\cup \left\{ x_c \right\}}{\min} (localUtil_c +\sum_{a_{c^\prime}\in inferChild_c}{ctxtUtil_c^{c^\prime}}\\
	&+\sum_{a_{c^\prime}\in C( a_c )\backslash inferChild_c}{preUtil_c^{c^\prime}))} ( X^{*})\\
	&=ctxtUtil_i^c(X^*)\\
	\end{aligned}
	$$
	Thus, the lemma is proved.
\end{proof}
\begin{theorem}
	HS-CAI is complete.
\end{theorem}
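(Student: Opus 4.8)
The plan is to establish completeness by proving two complementary facts: that HS-CAI terminates, and that it never prunes the branch containing an optimal solution $X^*$. Completeness then follows immediately, since a systematic branch-and-bound search that halts while preserving the optimal branch must eventually evaluate $X^*$ and report its cost as the optimum.

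First I would argue termination. The search part is a tree-based branch-and-bound (the simplified NCBB variant of SBB) operating over a finite number of variables with finite, discrete domains, so the space of partial assignments it traverses is finite; each CPA message either extends or backtracks a partial assignment within this space, and every branch is expanded at most once before being explored or pruned. For the inference part, I would appeal to the $eval_i$ bookkeeping, which guarantees that the descendants of any agent perform a context-based inference for at most one context pattern at a time, each consisting of a finite exchange of CTXT and context-based utility messages along the finite pseudo tree. Since the number of distinct context patterns is bounded by the finite number of assignments to the approximated dimensions $SList_i$, only finitely many inference rounds can be triggered. Hence the interleaved procedure halts.

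Second, and this is the crux, I would show the optimal solution is never pruned. Pruning discards a child subtree whenever the lower bound assigned to it exceeds the incumbent cost, so it suffices to show that along the search path consistent with $X^*$ the aggregated lower bound never strictly exceeds the true optimal cost. I would take any $Cpa_i$ that is a prefix of $X^*$ and invoke Lemma~\ref{lemmaCorrect}: for every child $a_c\in C(a_i)$ we have $cost_c(X^*)\geq lb_i^c(X^*(x_i))$, so summing over the children shows the bound the algorithm associates with the subtree rooted at $a_i$ is a genuine lower bound on the cost $X^*$ incurs there. Consequently the estimate attached to the branch leading to $X^*$ can never exceed $cost(X^*)$, which is the smallest attainable cost; therefore the branch of $X^*$ is never pruned and the search explores it to a complete assignment, whose cost the root records as the incumbent.

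Combining the two parts yields the claim: the search reaches the complete assignment $X^*$ and, by optimality, no cheaper assignment exists, so HS-CAI returns exactly $cost(X^*)$. The main obstacle I anticipate is not the admissibility of the bounds, since Lemma~\ref{lemmaCorrect} supplies it directly, but rather justifying that the context-based inference interleaved with the search neither corrupts the pruning bounds nor jeopardizes termination. For the former I would rely on Lemma~\ref{lemmaBound} together with Lemma~\ref{lemmaCorrect} to note that inference only replaces a bound by a tighter yet still admissible one; for the latter I would lean on the $eval_i$ flag and the finiteness of the context patterns to rule out unbounded re-inference when running contexts change.
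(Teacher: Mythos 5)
Your proposal is correct and follows essentially the same route as the paper: admissibility of the pruning bounds via Lemma~\ref{lemmaCorrect} (with Lemma~\ref{lemmaBound} ensuring the context-based refinement stays admissible), plus termination of the underlying tree-based search. The paper simply cites the PT-ISABB result that no agent receives two identical $Cpa$s for termination, whereas you spell out the finiteness argument (including for the inference rounds via $eval_i$) in more detail, but the substance is the same.
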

\begin{proof}
	Immediately from Lemma \ref{lemmaCorrect}, the optimal solution will not be pruned in HS-CAI. Furthermore, it has been proved that each agent will not receive two identical $Cpa$s in the search part from PT-ISABB \cite{deng2019pt}, and the termination of HS-CAI relies on the search part. Thus, HS-CAI is complete.
\end{proof}

\subsection{Complexity}
When it performs a context-based inference, $a_i$ needs to store the context-based utilities and utilities received from all its children. Thus, the overall space complexity is $O(|C(a_i)|d_{max}^{k})$ where $d_{max}=\max_{a_j\in Sep(a_i)}|D_j|$, and $k$ is the maximum dimension limit. Since a CTXTUTIL message only contains a context-based utility, its size is $O(d_{max}^{k})$.
For a CPA message, it is composed of the assignment of each agent and a context evaluation flag. Thus, the size of a CPA message is $O(|A|)$. Other messages like CTXT only carry an assignment combination of the approximated dimensions, it only requires $O(|A|)$ space.

The preprocessing phase in HS-CAI only requires $|A|-1$ messages, since only the utility propagation are performed. 
For the search part and context-based inference part in the hybrid phase, the message number of the search part grows exponentially to the agent number with the same as the search-based complete algorithms.
And the message number of the context-based inference part is proportional to the number of the context patterns selected by the context evaluation mechanism. 
\begin{figure}
	\subfloat[$k=6$]{
		\includegraphics[scale=0.215]{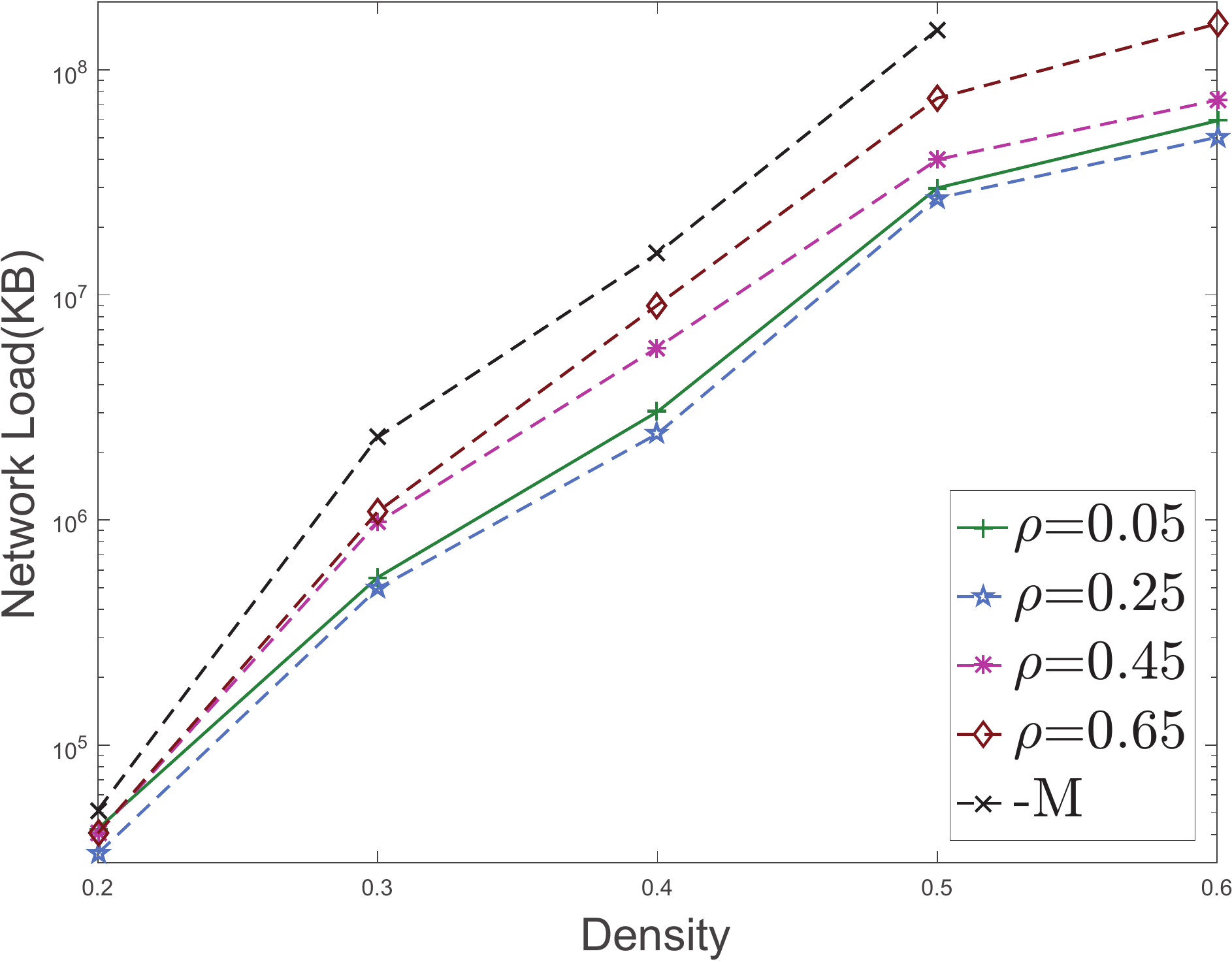} 
	}
	\subfloat[$k=10$]{
		\includegraphics[scale=0.215]{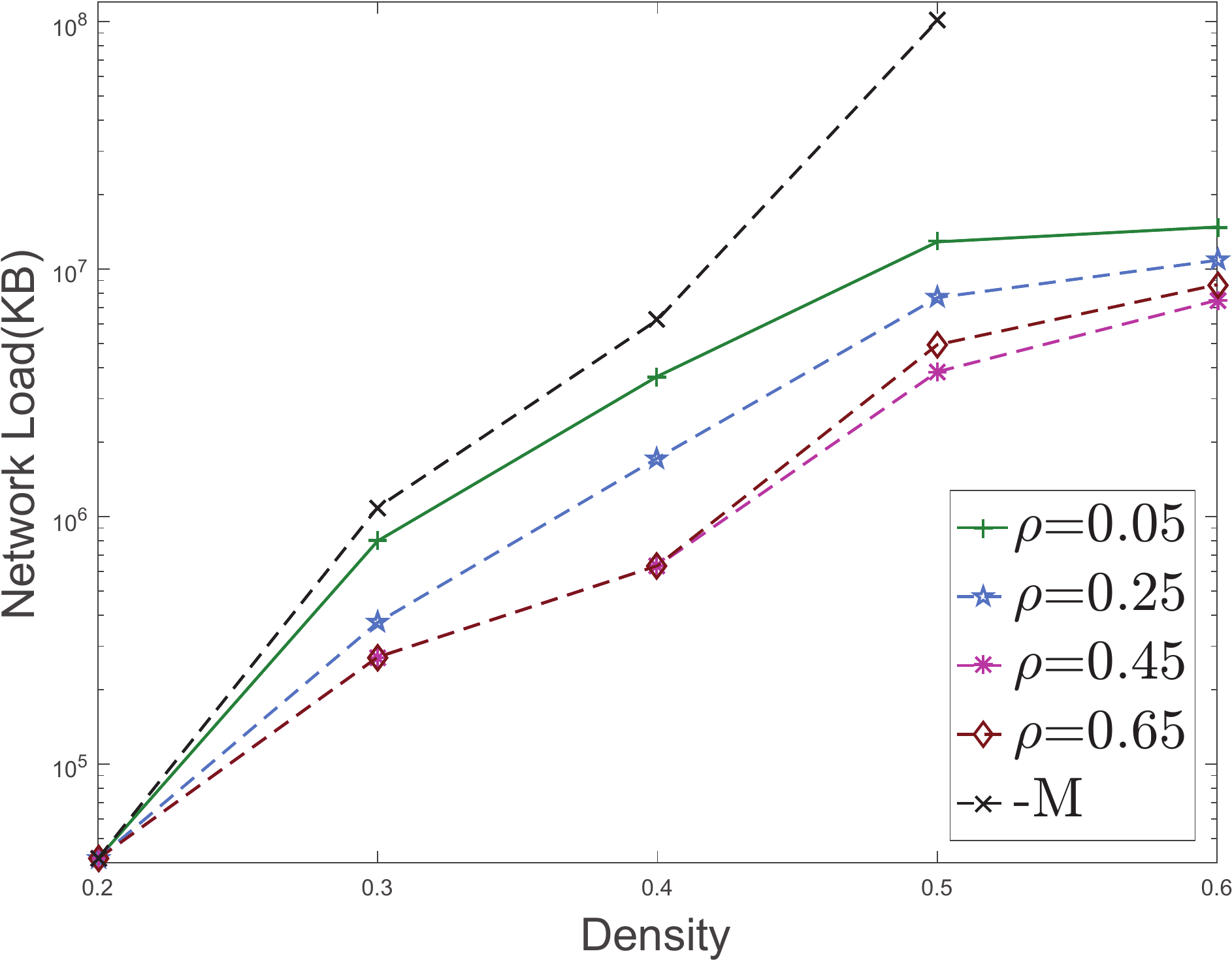} 
	}
	\caption{Network load of varying $\rho$ on different densities}
	\label{param}
\end{figure}
\begin{figure*}	
	\centering
	\subfloat[Message Number]{
		\includegraphics[scale=0.245]{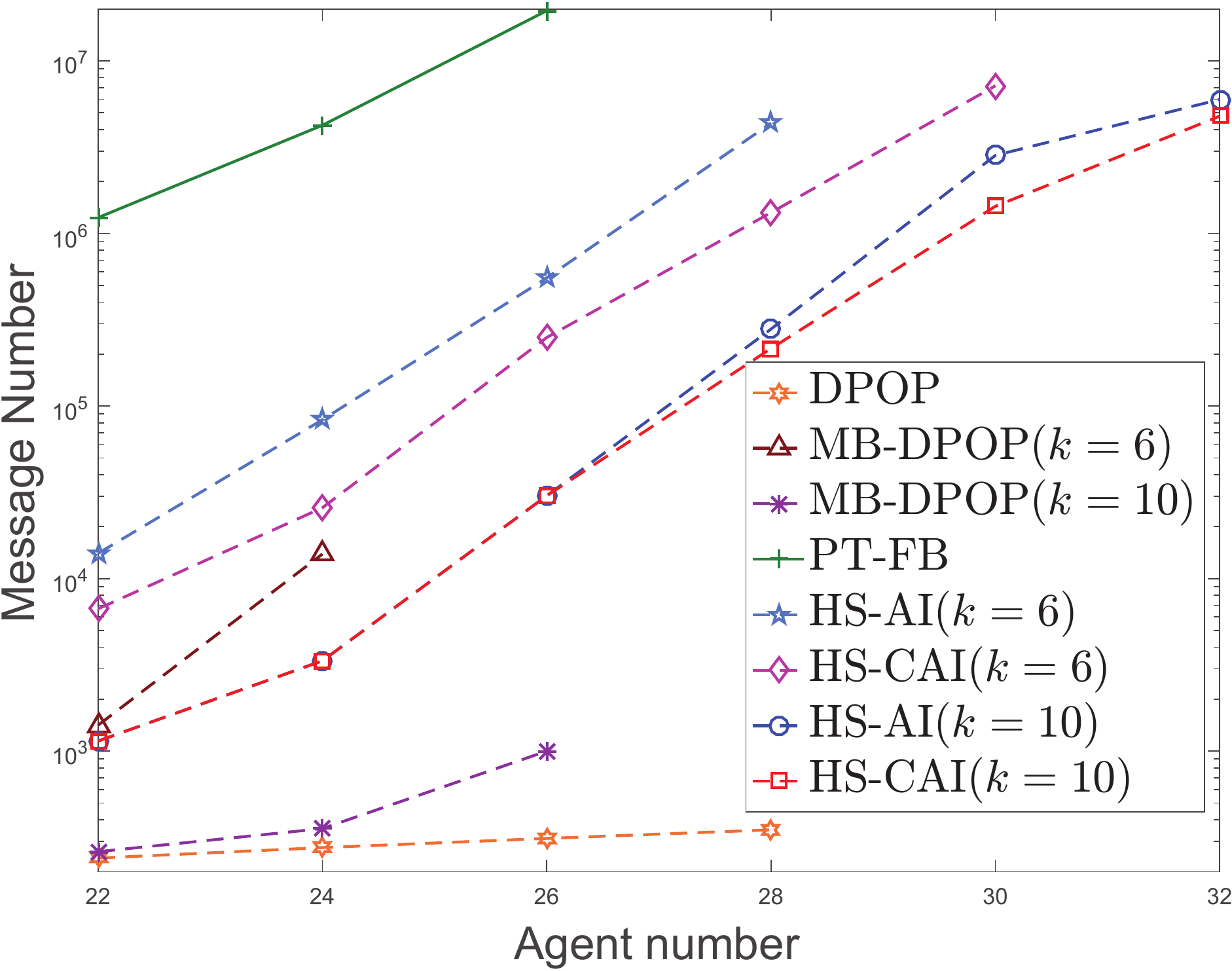} 
	}
	\subfloat[Network Load]{
		\includegraphics[scale=0.245]{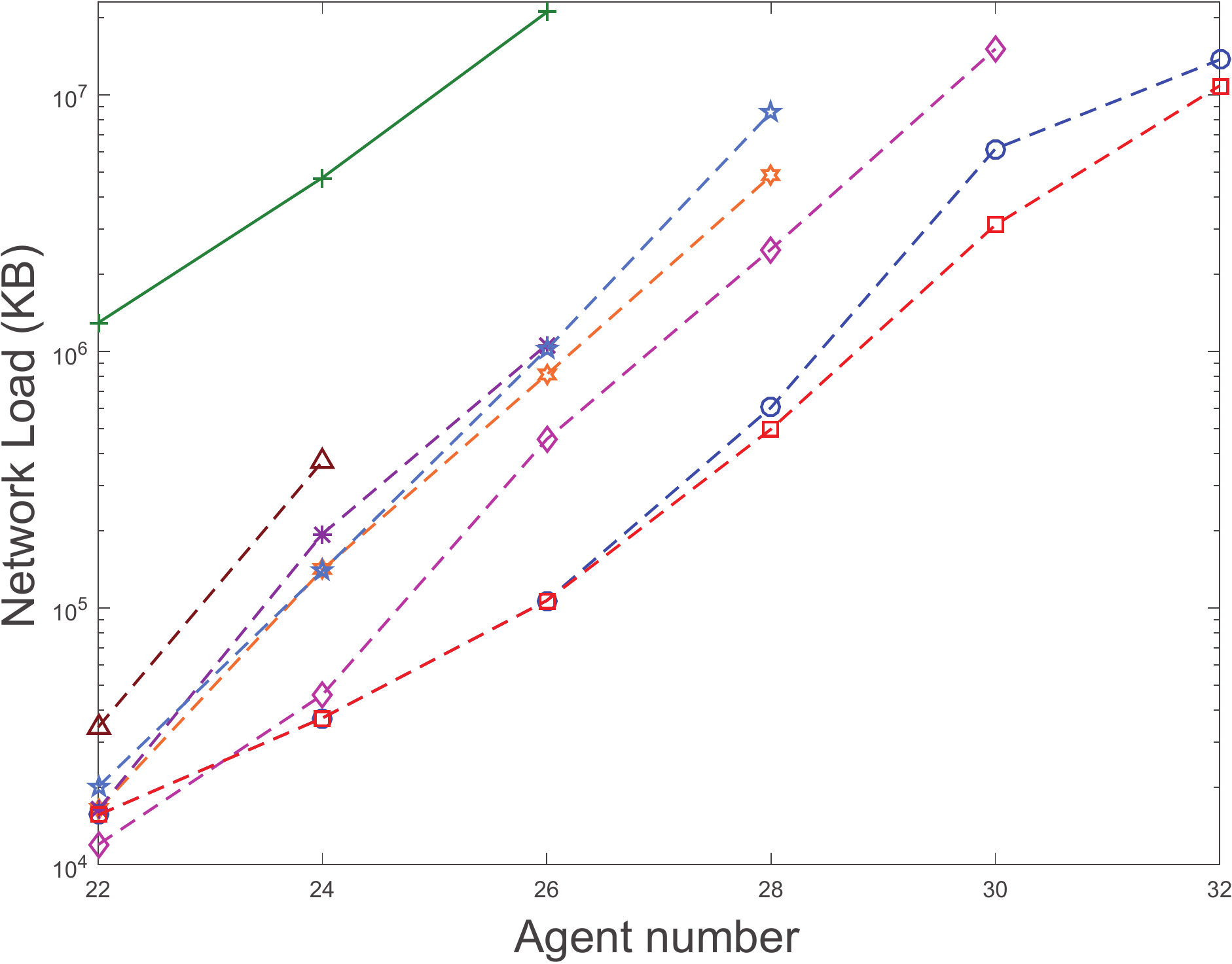} 
	}
	\subfloat[NCLOs]{
		\includegraphics[scale=0.2485]{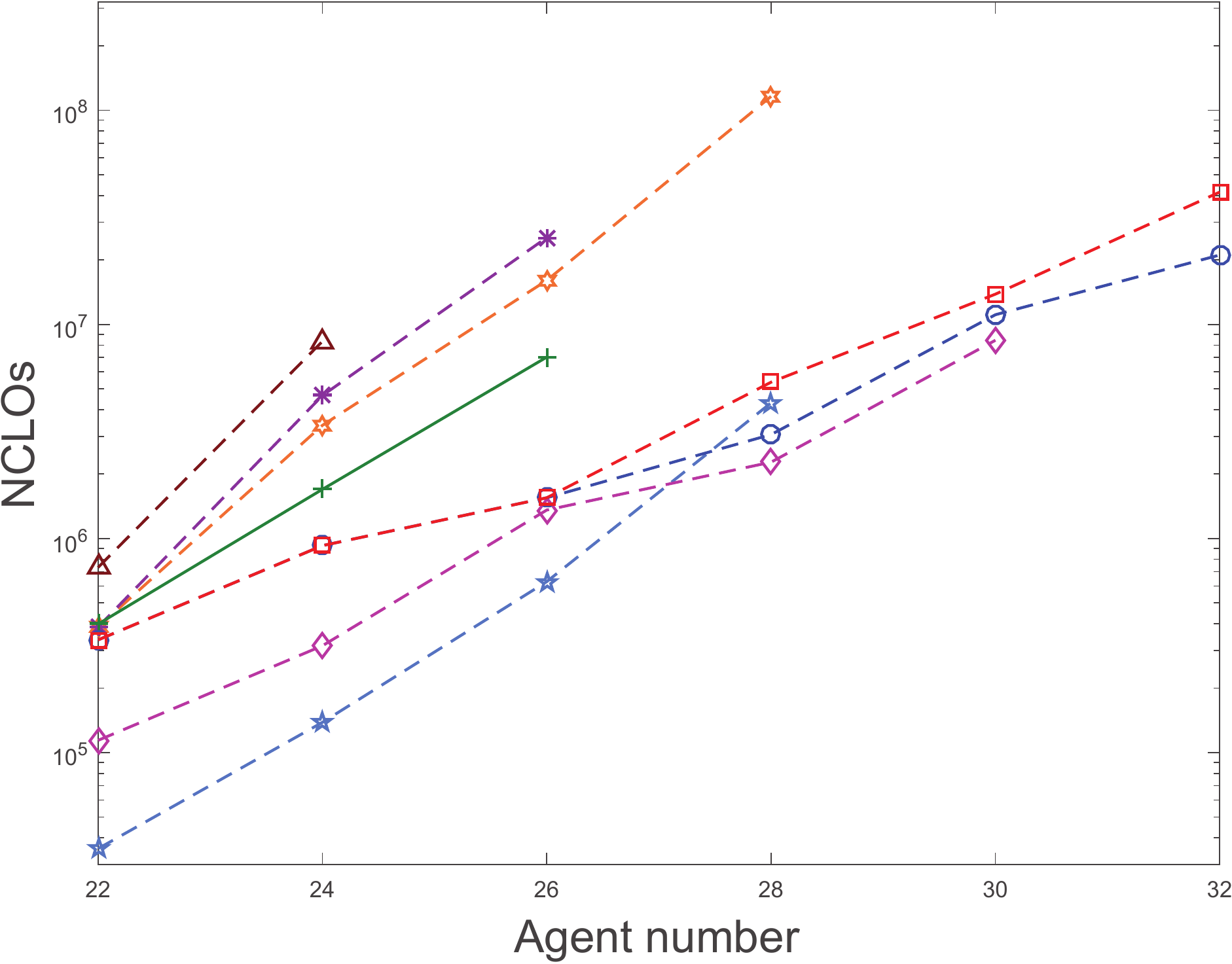} 
	}
	\\
	\caption{Performance comparison under different agents on sparse configuration}
	\label{025agent}
	\subfloat[Message Number]{
		\includegraphics[scale=0.245]{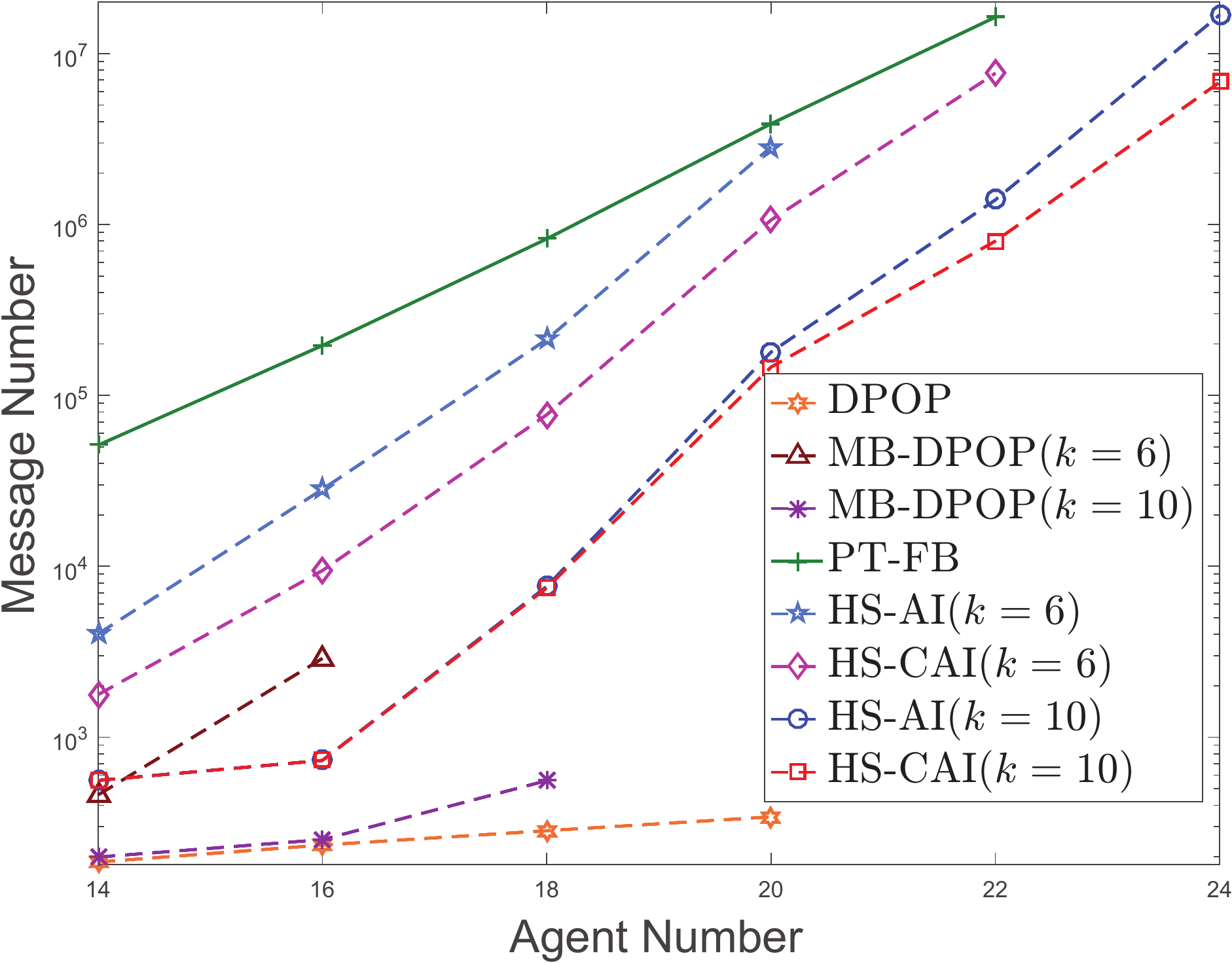} 
	}
	\subfloat[Network Load]{
		\includegraphics[scale=0.245]{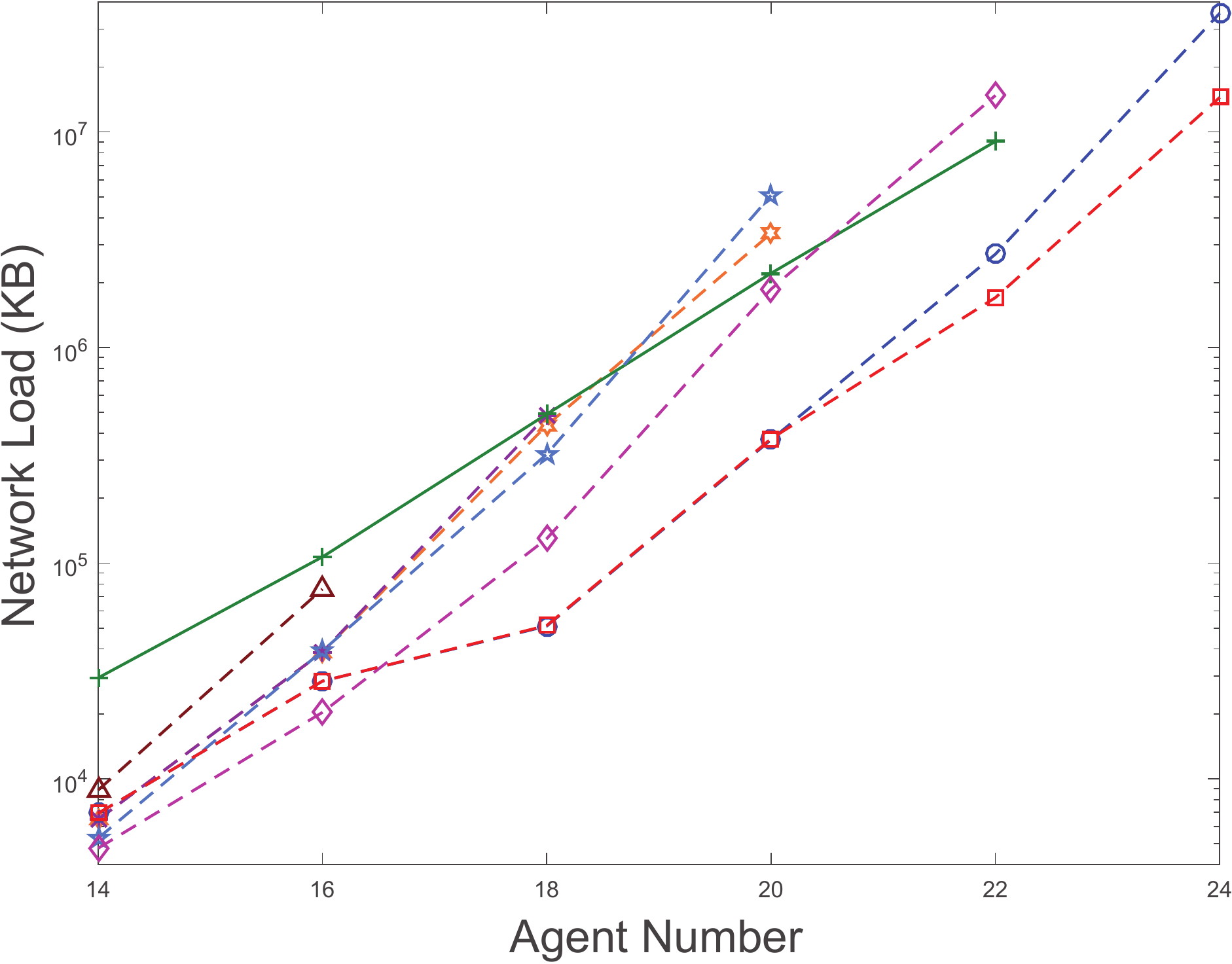} 
	}
	\subfloat[NCLOs]{
		\includegraphics[scale=0.245]{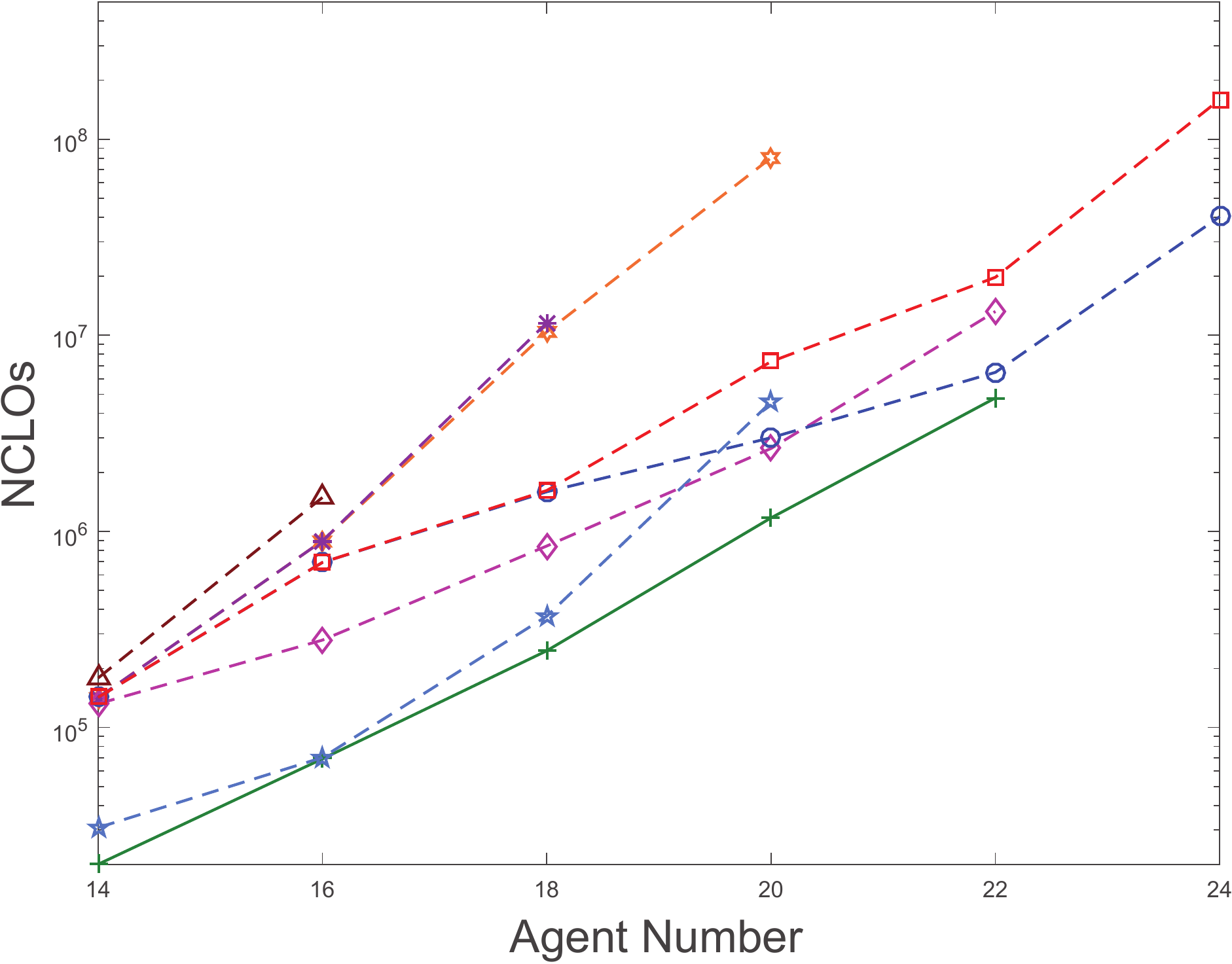} 
	}
	
	\caption{Performance comparison under different agents on dense configuration}
	\label{06agent}
\end{figure*}
\section{Empirical Evaluation}
In this section, we first investigate the effect of the parameter $t$ in the context evaluation mechanism on HS-CAI. 
Then, we present the experimental comparisons of HS-CAI with state-of-the-art complete DCOP algorithms. 
\subsection{Configuration and Metrics}
We empirically evaluate the performance of HS-CAI and state-of-the-art complete DCOP algorithms including PT-FB, DPOP and MB-DPOP on random DCOPs. Besides, we consider HS-CAI without the context-based inference part as HS-AI and HS-CAI without the context evaluation mechanism as HS-CAI(-M). Here, HS-AI is actually a variant of PT-ISABB in DCOP settings. All evaluated algorithms are implemented in DCOPSovler\footnote{ https://github.com/czy920/DCOPSovler}, the DCOP simulator developed by ourselves. Besides, we consider the parameter $t$ in HS-CAI related to both $h$ and $d_{max}$, where $d_{\max}={\max}_{a_i\in A}|D_i|$ and $h$ is the height of a pseudo tree. Therefore, we set $t=(d_{max}) ^{\rho h}$. Moreover, we choose $k=6$ and $k=10$ as the low and high memory budget for MB-DPOP, HS-AI, HS-CAI(-M) and HS-CAI. 
In our experiments, we use the message number and network load (i.e., the size of total information exchanged) to measure the communication overheads, and the NCLOs \cite{netzer2012concurrent} to measure the hardware-independent runtime where the logical operations in the inference and the search are accesses to utilities and constraint checks, respectively. For each experiment, we generate 50 random instances and report the average of over all instances.

\subsection{Parameter Tuning}
Firstly, we aim to examine the effect of different $\rho$ on the performance of HS-CAI to verify the effectiveness of the context evaluation mechanism. Specifically, we consider the DCOPs with 22 agents and the domain size of 3. The graph density varies from 0.2 to 0.6 and $\rho$ varies from 0.05 to 0.65. Here, we do not show the experiment results of $\rho$ greater than 0.65 since the larger $\rho$ leads to the exact same results as $\rho$ with 0.65.
Fig. \ref{param} presents the network load of HS-CAI(-M) and HS-CAI with different $\rho$. The average induced widths in this experiment are 8 $\sim$ 16. 
It can be seen from the figure that HS-CAI requires much less network load than HS-CAI(-M). 
That is because HS-CAI performs inference only for the context patterns selected by the context evaluation mechanism rather than all the contexts as HS-CAI(-M) does.

Besides, given the memory budget limit $k$, it can be observed that HS-CAI does not decrease all the time with the increase of $\rho$. This is due to the fact that increasing $\rho$ which leads to large $t$ can decrease the number of context-based inferences but also loose the tightness of lower bounds to some degree. Exactly as mentioned above, the context pattern selection offers a trade-off between the tightness of lower bounds and the number of compatible partial assignments. Moreover, it can be seen that the best value of $\rho$ is close to 0.25 in HS-CAI($k=6$) while the one in HS-CAI($k=10$) is near to 0.45. Thus, we choose $\rho$ to 0.25 for HS-CAI($k=6$) and 0.45 for HS-CAI($k=10$) in the following comparison experiments.
\subsection{Performance Comparisons}
Fig. \ref{025agent} gives the experimental results under different agent numbers on the sparse configuration where we consider the graph density to 0.25, the domain size to 3 and vary the agent number from 22 to 32. 
Here, the average induced widths are 9 $\sim$ 17. 
It can be seen from Fig. \ref{025agent}(a) and (b) that although the hybrid complete algorithms (e.g., HS-AI and HS-CAI) and PT-FB all use the search strategy to find the optimal solution, HS-AI and HS-CAI are superior to PT-FB in terms of the network load and message number. This is because the lower bounds in PT-FB cannot result in effective pruning by only considering the constraints related to the assigned agents.
Also, given a fixed $k$, HS-CAI requires fewer messages than HS-AI since the lower bounds produced by the context-based inference are tighter than the ones established by the context-free approximated inference. 
Besides, it can be seen that HS-CAI($k=6$) can solve larger problems than HS-AI($k=6$) and the inference-based complete algorithms like DPOP and MB-DPOP, which demonstrates the superiority of hybridizing search with context-based inference when the memory budget is relatively low.

Although inference requires larger messages than search, it can be observed from Fig. \ref{025agent}(b) that HS-CAI incurs less network load than HS-AI, which indicates that HS-CAI can find the optimal solution with fewer messages owing to the effective pruning and context evaluation mechanism.
Moreover, we can see from Fig. \ref{025agent}(c) that when solving problems with the agent number to 28, HS-CAI($k=6$) requires fewer NCLOs than HS-AI($k=6$) in spite of the exponential computation overheads incurred by the iterative inferences. This is because that HS-CAI($k=6$) can provide tight lower bounds to speed up the search so as to greatly reduce the constraint checks when solving the large scale problems under the limited memory budget.

Besides, we consider the DCOPs with the domain size of 3 and graph density of 0.6 as the dense configuration. The agent number varies from 14 to 24 and the average induced widths are 8 $\sim$ 18. 	
Fig. \ref{06agent} presents the performance comparison. 
It can be seen from Fig. \ref{06agent}(a) that DPOP and MB-DPOP cannot solve the problems with the agent number greater than 20 due to the large induced widths. Furthermore, since the inference-based complete algorithms have to perform inference on the entire solution space, these algorithms require much more NCLOs than the other competitors as Fig. \ref{06agent}(c) shows. 
Additionally, although they both perform the context-based inference, it can be seen from Fig. \ref{06agent}(b) and (c) that HS-CAI exhibits great superiority over MB-DPOP in terms of the network load and NCLOs. That is because HS-CAI only performs inference for the context patterns extracted by the context evaluation mechanism, while MB-DPOP needs to iteratively perform inference for all the contexts of cycle-cut variables. 
As for HS-CAI with different $k$, it can be seen from Fig. \ref{06agent}(a) and (c) that HS-CAI($k=10$) requires fewer messages but more NCLOs than HS-CAI($k=6$). That is because HS-CAI($k=10$) can produce tighter lower bounds but will incur more computation overheads than HS-CAI($k=6$).
\section{Conclusion}
By analyzing the feasibility of hybridizing search and inference, we propose a complete DCOP algorithm, named HS-CAI which combines search with context-based inference for the first time. Different from the existing hybrid complete algorithms, HS-CAI constructs tight lower bounds to speed up the search by executing context-based inference iteratively. Meanwhile, HS-CAI only needs to perform inference for a part of the contexts obtained from the search process by means of a context evaluation mechanism, which reduces the huge traffic overheads incurred by iterative context-based inferences.
We theoretically prove that the context-based inference can produce tighter lower bounds compared to the context-free approximated inference under the same memory budget. Moreover, the experimental results show that HS-CAI can find the optimal solution faster with less traffic overheads than the state-of-the-art.

In the future, we will devote to further accelerating the search process by arranging the search space with the inference results. In addition, we will also work for reducing the overheads caused by a context-based utility propagation. 
\section{Acknowledgments}
This work is funded by the Chongqing Research Program of Basic Research and Frontier Technology (No.:cstc2017jcyjAX0030), Fundamental Research Funds for the Central Universities (No.:2019CDXYJSJ0021) and Graduate Research and Innovation Foundation of Chongqing (No.:CYS17023). 
\bibliography{ref}  
\bibliographystyle{aaai}
\end{document}